\renewcommand\footnotetextcopyrightpermission[1]{} 
\newcommand{\remove}[1]{}
\begin{document}
\title{Brief Announcement: Shallow Overlay Trees Suffice for High-Throughput Consensus}

\author{Wojciech Golab}
\authornote{Wojciech Golab is supported in part by the Natural Sciences and Engineering Research Council (NSERC) of Canada, Discovery Grants Program, and by a Google Faculty Research Award.}
\affiliation{%
  \institution{Department of Electrical and Computer Engineering University of Waterloo}
  \streetaddress{200 University Avenue West, Waterloo, Ontario, N2L 3G1, Canada}
}
\email{wgolab@uwaterloo.ca}

\author{Hao Tan}
\affiliation{%
  \institution{School of Computer Science University of Waterloo}
  \streetaddress{200 University Avenue West, Waterloo, Ontario, N2L 3G1, Canada}
}
\email{h26tan@uwaterloo.ca}

\begin{abstract}
 All-to-all data transmission is a typical data transmission pattern in blockchain systems. Developing an optimization scheme that provides high throughput and low latency data transmission can significantly benefit the performance of those systems. In this work, we consider the problem of optimizing all-to-all data transmission in a wide area network (WAN) using overlay multicast. We prove that in a congestion-free core network model, using shallow broadcast trees with heights up to two is sufficient for all-to-all data transmission to achieve the optimal throughput allowed by the available network resources.
\end{abstract}
\maketitle

\section{Introduction}
 Starting with Bitcoin's great success \cite{Bitcoin}, blockchain systems have become a critical use case for high-performance consensus protocols. A practical blockchain system can involve hundreds to thousands of geographically distributed nodes. In many blockchain systems, all to all is the dominating communication pattern. Each participating node has to receive all other nodes' transactions to achieve complete decentralization in transaction processing. Also, the latency of committing a transaction is determined by when the sender receives acknowledgements from all other nodes or a majority quorum of nodes. To enhance the performance of blockchain systems, this paper considers the following general problem. Given a set of nodes connected by a WAN and each node having a stream of data to broadcast to all other nodes, how can we maximize the aggregated broadcast throughput while minimizing the latency for each node's data to reach all other nodes.

Previous work \cite{SteinerTreePacking} has shown that optimizing multicast throughput alone is already an NP-Hard problem with the constraints of network topology and link capacities. Efficient solutions usually rely on heuristics or advanced technique such as network coding \cite{NetworkCoding}. However, those solutions are not practical in a WAN environment due to the opaqueness of WAN's exact physical topology. Instead of assuming a specific network topology for WAN, we model the WAN as a set of sites connected by a core network with unlimited bandwidth. Every site can send and receive data from each other, bottle-necked only by the edge link bandwidth between each site and the core network. This model is not only simple but also valid as shown by recent measurements \cite{InternetCongestion}.

Moreover, overlay multicast has been proven to be a useful technique to provide high throughput data dissemination in a network without detailed topological information. To our surprise, when modelling the network as a congestion-free core and leveraging overlay multicast, it is possible for all-to-all data transmission to achieve both optimal aggregated throughput and low latency for data to reach all other nodes. The general idea is similar to \cite{SplitStream, Bullet}, which split the source stream at each node into multiple partitions and broadcast each partition with different overlay trees. We build on this prior work by proving that the optimal aggregated throughput is always achievable by using broadcast trees with heights up to two.

\section{Preliminaries}

\textbf{Network Model.} The network topology is a complete graph $G = (V, E)$ with $n$ vertices. There are two functions $ C_u: V \rightarrow \mathbb{R}^+$ and $ C_d: V \rightarrow \mathbb{R}^+$, which respectively define the uplink and downlink bandwidth of the edge link from a site to the core network. The data transmission between two sites consumes the sender's $uplink$ bandwidth and the receiver's $downlink$ bandwidth. The uplink and the downlink bandwidth of a site are shared by all unicast data transmissions associated with the site. For instance, a sender multicasting to $n$ receivers at the rate of $R$ will consume $nR$ of the sender's uplink bandwidth and $R$ of each receiver's downlink bandwidth.

\textbf{Rate of client data streams.} A client data stream is an infinite sequence of data batches from clients to be broadcast to all other sites in the network. The rate $R_i$ of a client data stream $s_i$  represents the incoming rate of client data at site $v_i$. For instance, letting $r$ be the number of incoming client requests per second at site $v$ and letting $S$ be the size of the data payload of each request, then the client data rate at site $v$ is $rS$. We assume that a site's client data stream does not consume its downlink bandwidth.

\textbf{Sustainable Rates.} For an all-to-all data transmission in $G(V,E)$ with $n$ sites, each site is associated with a client data stream $s_i$ that must be received by all other sites. Client data streams $s_1, s_2, \ldots, s_n$ with rates $R_1, R_2, .., R_n$ are said to be \emph{sustainable} if the following three conditions are all met: \\
(1) For each $v_i \in V$, $R_i \leq C_u(v_i)$; \
(2) For each $v_i \in V$, $\sum_{j \neq i} R_j \leq C_d(v_i)$; \ 
(3) $(n-1)\sum_{i=1}^{n} R_i \leq \sum_{i=1}^{n} C_u(v_i)$.

Intuitively, being sustainable is the minimum requirement for a set of client data streams to be broadcast at their incoming rates. Condition (1) ensures that each site has enough uplink bandwidth to send out its data at least once to other nodes. As each site has to receive from all other peers, condition (2) ensures that the aggregated rate of incoming streams does not exceed a site's downlink bandwidth. Condition (3) derives from the fact that the client data at each site must be sent at least $n-1$ times. If any of the above conditions are violated, the aggregated throughput of all-to-all data transmission will be less than $(n-1)\sum_{i=1}^{n} R_i$.

\textbf{Partitioning Scheme.} Assume the rate of a client data stream can be split at any granularity. A partitioning scheme $P(s_i, n)$ of a client data stream $s_i$ with rate $R_i$ splits elements of $s_i$ into $n$ streams $s_{i,1}, \ldots, s_{i,n}$ with rates $r_{i,1}, \ldots, r_{i,n}$ such that $R_i = \sum_{j = 1}^{n} r_{i,j}$. We refer to each split of the stream a \textit{sub-stream} of $s_i$. 

\section{Main Results}
\begin{theorem}
\label{th:1}
For client data streams $s_1, \ldots, s_n$ with sustainable rates  $R_1, \ldots, R_n$, there exists a partitioning scheme for each client data stream such that:
\begin{enumerate}
    \item Each sub-stream can be broadcast at its rate without violating downlink and uplink bandwidth constraints at any site.
    \item The height of each sub-stream's broadcast tree is at most 2.
\end{enumerate}
\end{theorem}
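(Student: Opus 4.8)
The plan is to exhibit one explicit partitioning-plus-routing scheme and then reduce the bandwidth-feasibility question to a classical transportation problem. For each ordered pair $(i,j)$ I would let a sub-stream $s_{i,j}$ of $s_i$ (with rate $r_{i,j}$ to be determined) be disseminated by the following tree rooted at its source $v_i$: if $j = i$, then $v_i$ broadcasts $s_{i,i}$ directly to all other $n-1$ sites (a height-$1$ tree); if $j \ne i$, then $v_i$ sends $s_{i,j}$ to the single relay $v_j$, which forwards it to the remaining $n-2$ sites $\{v_k : k \notin \{i,j\}\}$ (a height-$2$ tree). Every site then receives every sub-stream exactly once, so condition~(2) is discharged for free: each site $v_a$ receives precisely the sub-streams of the other $n-1$ sources, for a total downlink load of $\sum_{i \ne a} R_i \le C_d(v_a)$, regardless of how the rates are assigned. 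Thus the downlink constraint never binds, and the whole difficulty concentrates on the uplink.

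Next I would bookkeep the uplink load at a site $v_a$ as three contributions: the first-hop sends of its own relayed sub-streams, the direct broadcast of $s_{a,a}$, and the forwarding $v_a$ performs as a relay for other sources. Writing $L_a := \sum_i r_{i,a}$ for the total rate of sub-streams that use $v_a$ as their relay, a short calculation collapses these terms into the clean form
\begin{equation}
U_a = R_a + (n-2)\,L_a .
\end{equation}
Hence the uplink constraint $U_a \le C_u(v_a)$ is equivalent to the per-column cap $L_a \le \big(C_u(v_a) - R_a\big)/(n-2)$, which is nonnegative by condition~(1).

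It then remains to choose nonnegative rates $r_{i,j}$ with prescribed row sums $\sum_j r_{i,j} = R_i$ (a valid partition of each source) and column sums $\sum_i r_{i,j} = L_j$ obeying the caps above. This is exactly a transportation-feasibility problem: any column totals $L_a \ge 0$ with $\sum_a L_a = \sum_i R_i$ are realizable by a nonnegative matrix with the given row sums (for instance by the product assignment $r_{i,j} = R_i L_j / \sum_k R_k$), so the only thing to check is that caps admitting the required total exist. Summing the caps, such $L_a$ exist iff $\sum_a \big(C_u(v_a) - R_a\big) \ge (n-2)\sum_a R_a$, i.e.\ $\sum_a C_u(v_a) \ge (n-1)\sum_a R_a$ --- which is precisely condition~(3).

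I expect the main obstacle to be the second step: discovering the right routing (a distinct relay per sub-stream) together with the bookkeeping that makes the uplink collapse to $R_a + (n-2)L_a$. Once this identity is in hand, the reduction to transportation feasibility and the exact matching of the aggregate cap with condition~(3) follow mechanically. A minor point to handle separately is small $n$: for $n = 2$ the term $(n-2)L_a$ vanishes and condition~(1) alone suffices, so the division by $n-2$ is invoked only for $n \ge 3$.
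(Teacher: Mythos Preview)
Your proposal is correct and shares the paper's core construction: the same height-$\le 2$ tree per ordered pair $(i,j)$, and the same uplink decomposition into a reserved first-hop budget $R_a$ plus a relay load $(n-2)L_a$ charged against the residual $C_u(v_a)-R_a$. Where you diverge is in how the rates $r_{i,j}$ are produced. The paper runs a greedy fill (Algorithm~\ref{algo:1}): it walks the pairs $(i,j)$ in order and assigns $r_{i,j}=\min\{R'_i,\,U_j/(n-2)\}$, then proves by induction (Propositions~\ref{prop:1}--\ref{prop:2}) that condition~(3) guarantees enough residual capacity to finish every row. You instead observe that the feasibility question is a transportation problem and exhibit the closed-form product solution $r_{i,j}=R_iL_j/\sum_k R_k$ once suitable column totals $L_j$ are chosen. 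Your route is shorter and makes the role of condition~(3) transparent as a single aggregate inequality; the paper's greedy algorithm, being essentially a northwest-corner rule, yields a sparser matrix (at most $O(n)$ nonzero sub-streams rather than $n^2$), which is operationally attractive since each nonzero sub-stream is a separate overlay tree to maintain.
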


\subsection{Proof Sketch}
We prove theorem \ref{th:1} by constructing a possible partitioning scheme for each client data stream and associating each sub-stream with a broadcast tree with a height up to two.

\subsubsection{Constructing Sub-streams}
\label{section:construct-substreams}
Each client stream $s_i$ will be split into $n$ sub-streams $s_{i,1}, \ldots, s_{i,n}$ with rates $r_{i,1}, \ldots, r_{i,n}$. The data of the special sub-stream $s_{i,i}$ is sent directly from $v_i$ to all the remaining sites. The data of sub-stream $s_{i,j}$ for $i \neq j$ is sent from $v_i$ to $v_j$ first, and then $v_j$ will broadcast the data to the rest of the sites. All the broadcast trees defined previously have a height bounded by two.

\begin{wrapfigure}{r}{0.33\textwidth}
\begin{algorithm}[H]
 \label{algo:1}
 \SetKwInOut{Input}{Input}
 \SetKwInOut{Output}{Output}
 \Input{$G(V,E)$ \newline $C_u : V \rightarrow \mathbb{R}+$ \newline $R_i \ldots R_n$}
 \Output{$r_{1,1} \ldots r_{n,n}$}
  $r_{i,j} := 0,  1 \leq i,j \leq n$\;
  $U_i := C_u(v_i) - R_i, 1 \leq i \leq n$\;\label{line:2}
 \For{$i \gets$ 1 \textbf{to} $n$}{
  $R^{\prime}_i := R_i$\;
  \For{$j \gets$ 1 \textbf{to} $n$}{
    \eIf{$(n-2)R^{\prime}_i > U_j $}{\label{line:7}
        $r_{i,j} := \frac{U_j}{n-2}$\; \label{line:8}
    }{
        $r_{i,j} := R^{\prime}_i$ \label{line:10}
    }
    $U_j := U_j - (n-2)r_{i,j}$\;\label{line:11}
    $R^{\prime}_i :=  R^{\prime}_i - r_{i,j} $\; \label{line:13}
    \If{$R^{\prime}_i = 0$}{
        \textbf{break}\;
    }
  }
 }
  \Return $r_{1,1} \ldots r_{n,n}$\;
\caption{Sub-stream rate assigning algorithm}
\end{algorithm}
\vspace{-24pt}
\end{wrapfigure}

\subsubsection{Computing Sub-stream Rates}
Algorithm \ref{algo:1} computes the rate of each sub-stream defined in the previous section. For each node $v_i$, we divide its uplink bandwidth into two parts: $U_i^{\prime} = R_i$ and $U_i = C_u(v_i) - R_i$. $U_i^{\prime}$ represents the reserved uplink bandwidth for $v_i$ to send out all its data at least once; $U_i$ is the residual uplink bandwidth after excluding reserved uplink bandwidth.  The algorithm will iterate over all site pairs in $\{(i,j) | 1 \leq i \leq n, 1 \leq j \leq n\}$ in lexicographical order to compute sub-stream rates. For the iteration when sub-stream rate $r_{i,j}$ is computed,  the algorithm greedily allocates as much of $U_j$ as possible to $r_{i,j}$ until either $U_j$ is exhausted or the aggregated sub-stream rate reaches $R_i$. According to the overlay trees defined in the previous section, sending $s_{i,j}$ consumes $r_{i,j}$ of $U_i^{\prime}$ and $(n-2)r_{i,j}$ of $U_j$. This rule also applies to the case $i = j$, where sending $s_{i,i}$ consumes $(n-1)r_{i,i}$ of $U_i$. Note that, Algorithm~\ref{algo:1} does not aim to compute the optimal partitioning scheme, which favours one level tree overlays.



\section{Conclusion and Discussion}
According to theorem \ref{th:1}, by leveraging overlay multicast,   all-to-all data transmission can achieve the best possible throughput without paying an expensive price for latency. This result provides a theoretical foundation for ruling out deep overlay trees with heights greater than two when optimizing all-to-all data transmission for applications such as blockchains and consensus protocols. Although Theorem \ref{th:1} relies on the rate of incoming data to be sustainable, we can resort to a two phase optimization when dealing with client data rates that are not sustainable. The first phase computes the optimal sustainable rates based on available network resources. The second phase computes the optimal combinations of overlay trees that yield the lowest latency given the sustainable rates obtained in the first phase. 

\bibliographystyle{ACM-Reference-Format}
\bibliography{sample}

\newpage
\appendix
\section{Pictures}
\begin{figure}[htbp]
  \centering
  \includegraphics[width=\linewidth]{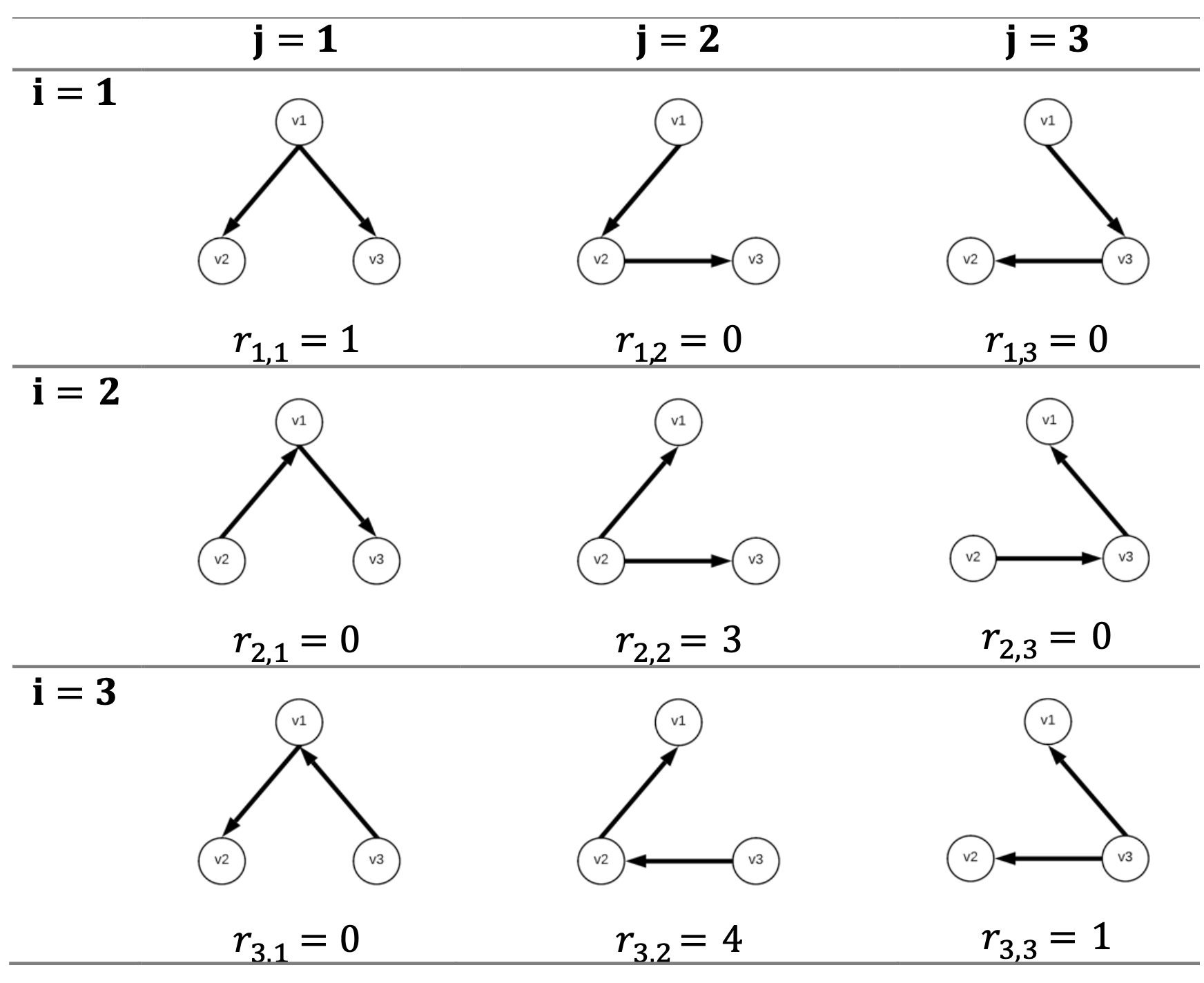}
  \caption{\textbf{Visualization of algorithm \ref{algo:1}}: This figure demonstrates an example with three sites $v_1, v_2, v_3$  with uplink bandwidth $C_u(v_1) = 2$, $C_u(v_2) = 10$, $C_u(v_3) = 6$. Let client data stream rates be $R_1 = 1$, $R_2 = 3$, $R_3 = 5$. After the first iteration, $r_{1,1} = 1$, $r_{1,2} = 0$ and  $r_{1,3} = 0$. After the second iteration, $r_{2,1} = 0$, $r_{2,1} = 3$ and  $r_{2,3} = 0$. After the final iteration, $r_{3,1} = 0$, $r_{3,2} = 4$ and  $r_{3,3} = 1$.}
  \medskip
  \small
  \label{fig:algovis}
\end{figure}

\begin{figure}[htbp]
  \centering
  \includegraphics[width=0.75\linewidth]{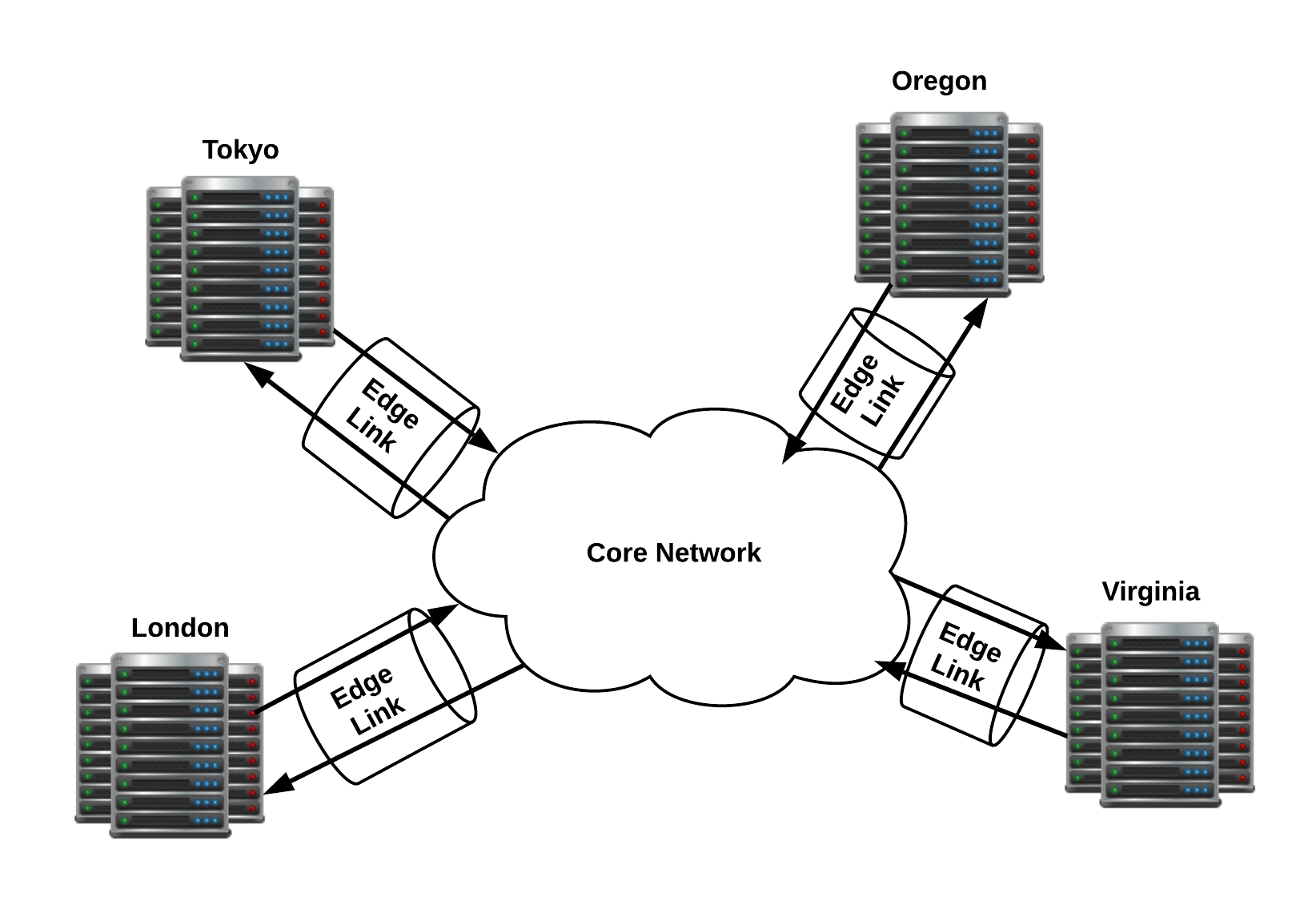}
  \caption{\textbf{Network Model}: An illustration of the network model used by this work}
  \medskip
  \small
  \label{fig:networkmodel}
\end{figure}

\begin{figure}[htbp]
  \centering
  \includegraphics[width=0.75\linewidth]{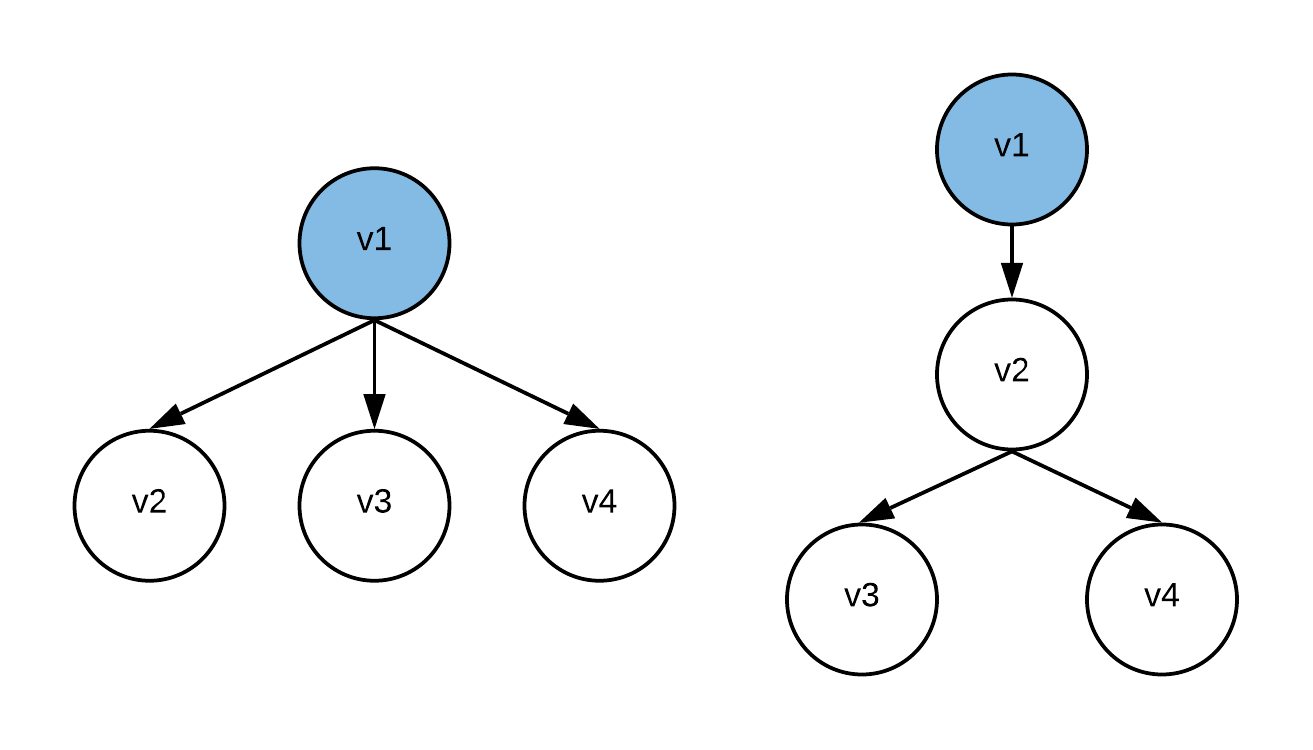}
  \caption{\textbf{Overlay Tree Examples}: This figure illustrates overlay trees constructed using the method described in section \ref{section:construct-substreams}. In a network with four sites, the tree on the left is the overlay tree of $s_{1,1}$ while the tree on the right is the overlay tree of $s_{1,2}$.}
  \medskip
  \small

  \label{fig:overlayexample}
\end{figure}

\newpage
\section{Proofs}
\subsection{Correctness Criteria}
The output of algorithm \ref{algo:1} is a set of sub-stream rates ${r_{1,1}, \ldots, r_{n,n}}$. A correct output satisfies the following three criteria for all $1 \leq i \leq n$:
\begin{enumerate}
    \item \textbf{Valid Partition Constraint: }The aggregated rate of all sub-streams of a client data stream equal to that client data stream's rate, which is equivalent to $\sum_{j=1}^{n} r_{i,j} = R_i$ for all $i$ such that $1 \leq i \leq n$.  
    \item \textbf{Uplink Capacity Constraint: } The aggregated rate of all sub-streams sent by $v_i$ is less than or equal to $C_u(v_i)$.  
    \item \textbf{Downlink Capacity Constraint: } The aggregated rate of all sub-streams received by $v_i$ is less than or equal to $C_d(v_i)$. 
\end{enumerate}

\subsection{Correctness of Algorithm \ref{algo:1}}
Let $U_i[\alpha]$ represent the value of $U_i$ at the start of iteration $\alpha$ of the outer loop
\begin{proposition}
\label{prop:1}
$\text{For all } \alpha \text{ such that } 1 \leq \alpha \leq n$, if $\text{ } \sum_{i = 1}^{n} U_i[\alpha] \geq (n-2)R_{\alpha}$, then $\sum_{i=1}^{n} r_{\alpha,i} = R_\alpha$ at the end of iteration $\alpha$ of outer loop.
\end{proposition}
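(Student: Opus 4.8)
The plan is to argue by contradiction, using conservation of the total residual uplink bandwidth as the central bookkeeping device. I would define the potential $\Phi = \sum_{i=1}^{n} U_i$ and observe that the only place $\Phi$ changes during outer iteration $\alpha$ is line~\ref{line:11}, which subtracts $(n-2)r_{\alpha,j}$ from $U_j$ at each inner step $j$. Hence $\Phi$ decreases by exactly $(n-2)r_{\alpha,j}$ per inner step, and over the whole inner loop its total decrease equals $(n-2)\sum_{j} r_{\alpha,j}$. In words, the total rate the algorithm allocates during iteration $\alpha$ is exactly $\tfrac{1}{n-2}$ times the residual bandwidth it consumes. This identity holds uniformly for all $j$, including the diagonal term $j=\alpha$, because line~\ref{line:11} is applied identically regardless of whether $i=j$.

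First I would pin down the behaviour of the two branches of the conditional. In the else branch (line~\ref{line:10}) the guard is $(n-2)R'_\alpha \le U_j$, the assignment sets $r_{\alpha,j}=R'_\alpha$, and the subsequent update drives $R'_\alpha$ to $0$ and triggers the break; the guard also keeps $U_j$ nonnegative. In the if branch (line~\ref{line:8}) the assignment $r_{\alpha,j}=U_j/(n-2)$ drives $U_j$ to exactly $0$, and because the guard is the \emph{strict} inequality $(n-2)R'_\alpha > U_j$, the updated $R'_\alpha$ stays strictly positive, so no break occurs. This yields a clean dichotomy: either the inner loop breaks with $R'_\alpha=0$ (all of $R_\alpha$ allocated), or it runs through all $n$ values of $j$ taking the if branch every time, in which case every $U_j$ has been driven to $0$ and therefore $\Phi=0$ at the end of iteration $\alpha$.

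Then I would run the contradiction. Suppose $\sum_{j} r_{\alpha,j} < R_\alpha$ at the end of iteration $\alpha$, i.e. $R'_\alpha>0$. By the dichotomy this is the no-break case, so $\Phi$ is reduced to $0$; combined with the conservation identity and the hypothesis this gives $(n-2)\sum_{j} r_{\alpha,j} = \sum_{i=1}^{n} U_i[\alpha] \ge (n-2)R_\alpha$. Dividing by $n-2>0$ yields $\sum_{j} r_{\alpha,j} \ge R_\alpha$, contradicting the supposition. Hence the inner loop must break with $R'_\alpha=0$, which is exactly the desired conclusion $\sum_{i=1}^{n} r_{\alpha,i} = R_\alpha$.

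The main obstacle I anticipate is not the arithmetic but justifying the branch dichotomy rigorously --- in particular, using the strictness of the if-guard to rule out a premature break and to guarantee that each visited $U_j$ is fully exhausted, so that ``failure to allocate all of $R_\alpha$'' really does force $\Phi=0$. I would also double-check the diagonal bookkeeping (the reserved bandwidth $U'_i=R_i$ absorbs the extra copy in the $i=i$ broadcast, leaving the residual accounting uniform at $(n-2)r_{\alpha,j}$) and confirm that the standing assumption $n>2$ is in force, since $n-2$ appears as a divisor throughout.
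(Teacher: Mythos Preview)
Your argument is correct and follows essentially the same contradiction strategy as the paper: assume the else branch never fires, so every inner step takes the if branch with $r_{\alpha,j}=U_j[\alpha]/(n-2)$, and then sum to obtain $(n-2)\sum_j r_{\alpha,j}=\sum_i U_i[\alpha]\ge (n-2)R_\alpha$, contradicting $\sum_j r_{\alpha,j}<R_\alpha$. The only cosmetic difference is that you package the summation as a potential $\Phi$ driven to zero, whereas the paper substitutes directly into the guard at the final inner iteration; the arithmetic and the use of the strict guard to force the dichotomy are identical.
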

\begin{proof}
This proposition can be proved by contradiction. Assume $\sum_{i = 1}^{n} U_i[\alpha] \geq (n-2)R_{\alpha}$ and $\sum_{i=1}^{n} r_{\alpha,i} \neq R_\alpha$ at the end of iteration $\alpha$ of the outer loop.  If that is the case, line \ref{line:10} is never executed,  otherwise,  $R^{\prime}_\alpha$ becomes 0 at line \ref{line:13} and the inner loop terminates with  $\sum_{i=1}^{n} r_{\alpha,i} = R_\alpha$. At the start of the inner loop's last iteration, $R^{\prime}_\alpha = R_\alpha - \sum_{i=1}^{n-1}r_{\alpha,i}$. Since line \ref{line:8} is executed at every iteration of the inner loop, we have $R^{\prime}_\alpha = R_\alpha - \frac{1}{n-2}\sum_{i=1}^{n-1}U_i[\alpha]$ and $(n-2)R^{\prime}_\alpha  > U_n[\alpha]$ at the start of the inner loop's last iteration.  Because $(n-2)R^{\prime}_\alpha  = (n-2)R_{\alpha} - \sum_{i=1}^{n-1}U_i[\alpha]$, $(n-2)R^{\prime}_\alpha  > U_n[\alpha]$ implies $(n-2)R_{\alpha} > \sum_{i = 1}^{n} U_i[\alpha]$,  which contradicts with the assumption.
\end{proof}

\begin{proposition}
\label{prop:2}
$\sum_{i = 1}^{n} U_i[\alpha] \geq (n-2)\sum_{i=\alpha}^{n}R_i,  \text{for all } \alpha \text{ such that } 1 \leq \alpha \leq n$
\end{proposition}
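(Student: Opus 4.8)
The plan is to prove Proposition \ref{prop:2} by induction on $\alpha$, drawing on Proposition \ref{prop:1} to supply the ``full allocation'' fact needed to advance the induction. The enabling observation is that the total residual uplink telescopes across an outer iteration: during outer iteration $\beta$, line \ref{line:11} decrements $U_j$ by exactly $(n-2)r_{\beta,j}$ for each $j$ (and by nothing for indices left unprocessed by an early break, since those keep $r_{\beta,j}=0$). Hence the value at the start of the next iteration satisfies the one-step relation
\begin{equation*}
\sum_{i=1}^{n} U_i[\alpha+1] \;=\; \sum_{i=1}^{n} U_i[\alpha] \;-\; (n-2)\sum_{j=1}^{n} r_{\alpha,j},
\end{equation*}
which is the recurrence that drives the whole argument.

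For the base case $\alpha=1$, the residuals are at their initial values $U_i[1]=C_u(v_i)-R_i$ (line \ref{line:2}), so $\sum_i U_i[1]=\sum_i C_u(v_i)-\sum_i R_i$. The required bound $(n-2)\sum_{i=1}^n R_i$ holds precisely when $\sum_i C_u(v_i)\ge (n-1)\sum_i R_i$, which is exactly sustainability condition (3); this is where the hypothesis of Theorem \ref{th:1} enters. For the inductive step, I would assume $\sum_i U_i[\alpha]\ge (n-2)\sum_{i=\alpha}^{n}R_i$. Since $\sum_{i=\alpha}^n R_i\ge R_\alpha$, this immediately gives $\sum_i U_i[\alpha]\ge (n-2)R_\alpha$, which is exactly the precondition of Proposition \ref{prop:1}; therefore stream $\alpha$ is fully partitioned, $\sum_j r_{\alpha,j}=R_\alpha$. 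Substituting this into the one-step relation yields
\begin{equation*}
\sum_{i=1}^{n} U_i[\alpha+1] \;=\; \sum_{i=1}^{n} U_i[\alpha] - (n-2)R_\alpha \;\ge\; (n-2)\!\left(\sum_{i=\alpha}^{n}R_i - R_\alpha\right) \;=\; (n-2)\sum_{i=\alpha+1}^{n}R_i,
\end{equation*}
which closes the induction.

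The step I expect to be the crux is not any calculation but recognizing the interlocking structure of the two propositions: the induction hypothesis of Proposition \ref{prop:2} at level $\alpha$ is exactly what certifies the precondition of Proposition \ref{prop:1} at level $\alpha$, and the conclusion of Proposition \ref{prop:1} (full allocation of $R_\alpha$) is precisely what is needed to push Proposition \ref{prop:2} from level $\alpha$ to level $\alpha+1$. The only point demanding care is confirming that the inner-loop \textbf{break} after line \ref{line:13} never invalidates the telescoping identity, since any site index $j$ skipped during iteration $\alpha$ contributes $r_{\alpha,j}=0$ and hence leaves $U_j$ unchanged; with that checked, the full sum $\sum_j r_{\alpha,j}=R_\alpha$ correctly captures the total decrement.
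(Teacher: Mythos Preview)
Your proof is correct and follows essentially the same approach as the paper: induction on $\alpha$, with the base case supplied by sustainability condition (3) and the inductive step driven by the telescoping identity $\sum_i U_i[\alpha+1]=\sum_i U_i[\alpha]-(n-2)\sum_j r_{\alpha,j}$ together with Proposition~\ref{prop:1}. Your treatment is in fact slightly more careful than the paper's in two places: you make explicit why the induction hypothesis implies the precondition of Proposition~\ref{prop:1} (via $\sum_{i=\alpha}^n R_i\ge R_\alpha$), and you verify that an early \textbf{break} does not disturb the telescoping identity.
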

\begin{proof}
This proposition can be proved by induction on $\alpha$. \\ Base case $\alpha = 1$: $\sum_{i = 1}^{n} U_i[1] = \sum_{i = 1}^{n}(C_u(v_i) - R_i)$. According to condition (3) of sustainable rates, $\sum_{i = 1}^{n}(C_u(v_i) - R_i) \geq (n-2)\sum_{i = 1}^{n}R_i$.\\
Induction step: For an arbitrary number $\alpha$ such that $1 < \alpha \leq n$, assume the proposition \ref{prop:2} holds for $\alpha - 1$.  We have:
\begin{equation}
\label{eq:1}
\sum_{i = 1}^{n} U_i[\alpha - 1] \geq (n-2)\sum_{i=\alpha - 1}^{n}R_i
\end{equation}
As a result of proposition \ref{prop:1}, $\sum_{i=1}^{n} r_{\alpha - 1,i} = R_{\alpha - 1}$ at the end of the outer loop's iteration $\alpha - 1$. Since line \ref{line:11} is executed at every iteration of the inner loop, we have $U_i[\alpha] = U_i[\alpha-1] - (n-2)r_{\alpha-1,i}$ for all i such that $1 \leq i \leq n$. By summing up all $i$,  $\sum_{i = 1}^{n} U_i[\alpha] = \sum_{i = 1}^{n} U_i[\alpha - 1] - (n-2)\sum_{i=1}^{n} r_{\alpha - 1,i}$ and  $\sum_{i=1}^{n} r_{\alpha - 1,i} = R_{\alpha - 1}$. Therefore, by subtracting $(n -2)R_{\alpha - 1}$ from both sides of inequiality \ref{eq:1}, we have $\sum_{i = 1}^{n} U_i[\alpha] \geq (n-2)\sum_{i=\alpha}^{n}R_i$.
\end{proof}

\begin{proposition}
\label{prop:3}
The output of algorithm \ref{algo:1} satisfies Valid Partition Constraint
\end{proposition}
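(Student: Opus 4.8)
The plan is to obtain Proposition~\ref{prop:3} as an immediate consequence of the two preceding propositions, so the bulk of the work is already done and this argument is essentially a combination step. First I would observe that the Valid Partition Constraint, $\sum_{j=1}^{n} r_{i,j} = R_i$, is after renaming the summation index exactly the per-stream identity $\sum_{i=1}^{n} r_{\alpha,i} = R_\alpha$ that Proposition~\ref{prop:1} produces at the end of outer-loop iteration $\alpha$. Moreover, the values $r_{\alpha,1}, \ldots, r_{\alpha,n}$ are assigned only during iteration $\alpha$ of the outer loop and are never modified afterward, so it suffices to establish the identity at the end of iteration $\alpha$ for every $\alpha$ with $1 \leq \alpha \leq n$.

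Second, I would verify the hypothesis of Proposition~\ref{prop:1} for each $\alpha$. Proposition~\ref{prop:2} supplies the bound $\sum_{i=1}^{n} U_i[\alpha] \geq (n-2)\sum_{i=\alpha}^{n} R_i$. Since client data stream rates are nonnegative, the tail sum dominates its leading term, i.e.\ $\sum_{i=\alpha}^{n} R_i \geq R_\alpha$, and hence $\sum_{i=1}^{n} U_i[\alpha] \geq (n-2)R_\alpha$. This is precisely the premise required by Proposition~\ref{prop:1}, so applying that proposition yields $\sum_{i=1}^{n} r_{\alpha,i} = R_\alpha$ at the end of iteration $\alpha$. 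Quantifying over all $\alpha$ and invoking the first observation then gives $\sum_{j=1}^{n} r_{i,j} = R_i$ for all $i$, which is the claim.

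The step I would flag is not an obstacle so much as a point of care: the argument relies implicitly on the nonnegativity of the rates $R_i$ to pass from the tail-sum lower bound of Proposition~\ref{prop:2} to the single-term lower bound $(n-2)R_\alpha$ demanded by Proposition~\ref{prop:1}. I would make sure this assumption is explicitly available from the model (rates are nonnegative quantities by definition) and state it, since without it the chain of inequalities breaks. Beyond this, there is genuinely no hard calculation remaining; the proposition is a direct corollary.
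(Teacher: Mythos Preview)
Your proposal is correct and matches the paper's approach exactly: the paper's proof of Proposition~\ref{prop:3} is the single sentence ``This proposition holds as the direct outcome of proposition~\ref{prop:1} and proposition~\ref{prop:2},'' and you have simply spelled out how those two results combine. Your flagged point about nonnegativity of the $R_i$ is a fair bit of housekeeping (and is indeed guaranteed by the model, since rates are physical quantities), but it introduces no new idea beyond what the paper intends.
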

\begin{proof}
This proposition holds as the direct outcome of proposition \ref{prop:1} and  proposition \ref{prop:2}. 
\end{proof}

\begin{lemma}
\label{lemma:2}
$U_i[\alpha] \geq 0$ for all $i, \alpha$ such that $1 \leq i, \alpha \leq n$
\end{lemma}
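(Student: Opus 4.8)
The plan is to prove a slightly stronger invariant than the one stated, namely that $U_j \geq 0$ holds at \emph{every} point during the execution of Algorithm~\ref{algo:1} --- in particular immediately before and immediately after every execution of line~\ref{line:11} --- rather than only at the start of each outer-loop iteration. Lemma~\ref{lemma:2} then follows immediately, since $U_i[\alpha]$ is merely the value of $U_i$ at one such point. I would establish this invariant by induction on the sequence of updates applied to the $U$-variables, equivalently on the inner-loop iterations taken in execution order.

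For the base case, I would appeal to condition (1) of the definition of sustainable rates: since $R_j \leq C_u(v_j)$, the initialization at line~\ref{line:2} gives $U_j = C_u(v_j) - R_j \geq 0$ for every $j$. For the inductive step, suppose $U_j \geq 0$ immediately before an execution of line~\ref{line:11} that updates $U_j$ while the pair $(i,j)$ is being processed; I would argue that the updated value is again non-negative by splitting on the branch taken at line~\ref{line:7}. If the guard $(n-2)R^{\prime}_i > U_j$ holds, then line~\ref{line:8} sets $r_{i,j} = \frac{U_j}{n-2}$, so $(n-2)r_{i,j} = U_j$ and the update leaves $U_j = 0$. Otherwise line~\ref{line:10} sets $r_{i,j} = R^{\prime}_i$ while the negation of the guard gives $(n-2)R^{\prime}_i \leq U_j$, so the update yields $U_j - (n-2)R^{\prime}_i \geq 0$. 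Since line~\ref{line:11} modifies only the single variable $U_j$, all other $U$-variables retain their inductively non-negative values, completing the step.

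The step I expect to require the most care is the branch analysis at line~\ref{line:7}: the two cases must be shown to preserve non-negativity, and it is worth noting explicitly that $n - 2 > 0$ (i.e.\ $n \geq 3$) so that the division at line~\ref{line:8} is well-defined and the factor $(n-2)$ is positive. Everything else is bookkeeping. In particular, the argument depends on neither condition (2) nor condition (3) of sustainability; only condition (1) is invoked, and only in the base case.
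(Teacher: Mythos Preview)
Your proof is correct and takes essentially the same approach as the paper: both establish the base case from condition~(1) of sustainability and then argue inductively that the branch at line~\ref{line:7} guarantees $(n-2)r_{i,j} \leq U_j$, so the update at line~\ref{line:11} keeps $U_j$ non-negative. Your version is slightly more careful---you induct over inner-loop iterations rather than outer-loop iterations, spell out both branches explicitly, and flag the $n\geq 3$ requirement---but the underlying argument is identical.
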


\begin{proof}
 We prove this lemma by induction on $\alpha$. Base case $\alpha = 1$: By line \ref{line:2}, we have $U_i[1] = C_u(v_i) - R_i$ for all i such that $1 \leq i \leq n$, according to condition (1) of sustainable rates,  $U_i[1] \geq 0$ holds for all i such that $1 \leq i \leq n$.
 Induction step: For an arbitrary number $\alpha$ such that $1 < \alpha \leq n$, assume $U_i[\alpha - 1] \geq 0$. Line \ref{line:7} and line \ref{line:8} will guarantee $r_{\alpha,i} \leq \frac{U_i[\alpha - 1]}{n-2}$ and $U_i[\alpha] =  U_i[\alpha - 1] - (n-2)r_{\alpha,i}$ according to line \ref{line:11}. As a result, $U_i[\alpha] \geq 0$.
\end{proof}

\begin{proposition}
The output of algorithm \ref{algo:1} satisfies the Uplink Capacity Constraint
\end{proposition}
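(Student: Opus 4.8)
The plan is to compute the exact uplink consumption at an arbitrary site $v_k$ directly from the overlay trees, and then to bound it using the quantities that Algorithm~\ref{algo:1} already tracks. First I would classify the demands on $v_k$'s uplink into two groups. The first group is $v_k$ transmitting its own sub-streams: for each $j \neq k$ the sub-stream $s_{k,j}$ contributes a single unicast $v_k \to v_j$ of rate $r_{k,j}$, while the special sub-stream $s_{k,k}$ is broadcast directly from $v_k$ to all $n-1$ other sites and hence contributes $(n-1)r_{k,k}$. The second group is $v_k$ acting as the relay in a height-two tree: for each $i \neq k$ the sub-stream $s_{i,k}$ is forwarded by $v_k$ to the $n-2$ sites other than $v_i$ and $v_k$, contributing $(n-2)r_{i,k}$. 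Summing these yields the total uplink usage at $v_k$,
\begin{equation*}
\sum_{j \neq k} r_{k,j} + (n-1)r_{k,k} + (n-2)\sum_{i \neq k} r_{i,k}.
\end{equation*}

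Next I would simplify by writing $(n-1)r_{k,k} = r_{k,k} + (n-2)r_{k,k}$ and regrouping, which collapses the three terms into
\begin{equation*}
\sum_{j=1}^{n} r_{k,j} + (n-2)\sum_{i=1}^{n} r_{i,k} = R_k + (n-2)\sum_{i=1}^{n} r_{i,k},
\end{equation*}
where the equality invokes the Valid Partition Constraint from Proposition~\ref{prop:3}. It therefore suffices to prove that $(n-2)\sum_{i=1}^{n} r_{i,k} \leq C_u(v_k) - R_k = U_k[1]$, i.e.\ that the residual bandwidth $U_k$ is never overspent.

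To establish this I would track how $U_k$ evolves. The only line modifying $U_k$ is line~\ref{line:11}, and within outer iteration $\alpha$ it fires exactly once, when the inner index reaches $j = k$, subtracting $(n-2)r_{\alpha,k}$; if the inner loop breaks earlier then $r_{\alpha,k}=0$ and the accounting still holds. Hence the value of $U_k$ after all $n$ outer iterations equals $U_k[1] - (n-2)\sum_{\alpha=1}^{n} r_{\alpha,k}$, and since $\sum_{\alpha} r_{\alpha,k} = \sum_i r_{i,k}$ this is exactly $U_k[1]$ minus the quantity we must bound. By the same case analysis as in Lemma~\ref{lemma:2} --- either line~\ref{line:8} drives $U_k$ exactly to $0$, or the else branch guarantees $(n-2)r_{\alpha,k} \leq U_k$ so the decrement cannot turn $U_k$ negative --- this final value is nonnegative, which is precisely the required inequality.

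The main obstacle I anticipate is the bookkeeping in the first step rather than any deep estimate: one must correctly peel the $r_{k,k}$ term (a height-one broadcast costing $(n-1)r_{k,k}$) away from the relay terms, and confirm that the reserved bandwidth $U_k^{\prime} = R_k$ absorbs exactly the $\sum_{j} r_{k,j}$ portion while the residual $U_k$ absorbs the $(n-2)\sum_i r_{i,k}$ portion. A secondary technical point is that Lemma~\ref{lemma:2} as stated bounds $U_k[\alpha]$ only for $\alpha \leq n$, whereas the argument needs the value after the final iteration; I would close this gap by observing that the nonnegativity proof of Lemma~\ref{lemma:2} applies verbatim to every decrement of $U_k$, the last one included.
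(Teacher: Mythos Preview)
Your proposal is correct and follows essentially the same approach as the paper: split the uplink usage at $v_k$ into the reserved portion $U_k' = R_k$ (absorbing $\sum_j r_{k,j}$ via Proposition~\ref{prop:3}) and the residual portion $U_k$ (absorbing $(n-2)\sum_i r_{i,k}$ via the nonnegativity in Lemma~\ref{lemma:2}). You spell out the bookkeeping more explicitly than the paper does --- in particular the regrouping of the $(n-1)r_{k,k}$ term and the observation that Lemma~\ref{lemma:2} must be read one step past $\alpha = n$ --- but the underlying decomposition and the two lemmas invoked are identical.
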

\begin{proof}
From lemma \ref{lemma:2}, we have $U_i \geq 0$ throughout the execution of algorithm \ref{algo:1} for all $i$ such that $1 \leq i \leq n$.
According to the overlay defined in the previous section \ref{section:construct-substreams}, sending $s_{i,j}$ consumes $r_{i,j}$ of $U_i^{\prime}$ and $U_i^{\prime}$ is consumed only by sending $v_i$'s sub-streams. By proposition \ref{prop:3} , $\sum_{j=1}^{n} r_{i,j} = R_i$ for all $i$ such that $1 \leq i \leq n$. Since $U_i^{\prime}$ equals to $R_i$, sending all of $v_i$'s sub-streams will consume exactly the amount of its reserved uplink bandwidth. Because $C_u(v_i) = U_i + U^{\prime}_i$, no uplink bandwidth constraint is violated.
\end{proof}

\begin{proposition}
The output of algorithm \ref{algo:1} satisfies the Downlink Capacity Constraint
\end{proposition}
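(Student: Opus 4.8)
The plan is to show that the downlink load at an arbitrary site $v_k$ is determined entirely by the sustainability assumption and the shape of the broadcast trees, independently of how Algorithm~\ref{algo:1} splits the streams. Concretely, I would fix a site $v_k$ and account for every sub-stream $s_{i,j}$ whose delivery consumes $v_k$'s downlink bandwidth, then sum these contributions and compare the total against $C_d(v_k)$.

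First I would classify the ways in which $v_k$ receives data, following the tree construction of Section~\ref{section:construct-substreams}. For a source $v_i$ with $i \neq k$, there are exactly three roles $v_k$ can play: (i) a leaf of the height-one star used to broadcast the special sub-stream $s_{i,i}$, contributing $r_{i,i}$ to its downlink; (ii) the relay that receives $s_{i,k}$ directly from $v_i$ before rebroadcasting it, contributing $r_{i,k}$; and (iii) a leaf that receives $s_{i,j}$ (for $j \neq i$ and $j \neq k$) from the relay $v_j$, contributing $r_{i,j}$. In every case $v_k$ receives the sub-stream exactly once, and the rebroadcast it performs in role (ii) consumes only its \emph{uplink} bandwidth, so it never adds to the downlink tally.

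The key step is to observe that, for each fixed source $v_i$ with $i \neq k$, the three roles above together account for every index $j \in \{1, \ldots, n\}$ exactly once: $j = i$ falls under role (i), $j = k$ under role (ii), and the remaining $n-2$ indices under role (iii). Hence the downlink bandwidth that $v_k$ spends receiving $v_i$'s data is $\sum_{j=1}^{n} r_{i,j}$, which equals $R_i$ by Proposition~\ref{prop:3}. Summing over all sources, the total downlink usage at $v_k$ is $\sum_{i \neq k} R_i$, and condition~(2) of sustainable rates gives $\sum_{i \neq k} R_i \leq C_d(v_k)$, which completes the argument.

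The main obstacle is the case analysis in the second step: I must verify that the three roles are mutually exclusive and jointly exhaustive for each source, so that no sub-stream is double-counted or omitted, and that $v_k$'s own rebroadcasting as a relay never touches its downlink. It is worth emphasizing that Algorithm~\ref{algo:1} never inspects any downlink capacity; the downlink constraint holds purely because each destination receives each source's full rate exactly once, making the downlink load invariant across all valid partitioning schemes and bounded directly by sustainability condition~(2).
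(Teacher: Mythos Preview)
Your proposal is correct and follows the same line as the paper's proof: each destination receives every other site's full data stream exactly once through the height-$\leq 2$ trees, so the downlink load at $v_k$ is $\sum_{i\neq k} R_i$, which is bounded by $C_d(v_k)$ via sustainability condition~(2). The paper states this in two sentences, whereas you spell out the per-role case analysis and invoke Proposition~\ref{prop:3} explicitly, but the underlying argument is identical.
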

\begin{proof}
 Since every sub-stream is broadcast by an overlay tree, each site receives all other site's data exactly once. According to the condition (2) of sustainable rates, there is also no violation of downlink bandwidth constraint. 
\end{proof}

%
\end{document}